\newtheorem{theorem}{Theorem} 
\newtheorem{lemma}{Lemma}
\newtheorem{definition}{Definition}
\newtheorem{remark}[definition]{Remark}
\newtheorem{fact}[definition]{Fact}
\newcommand{\To}{{\to}}
\title{The Power of Generalized Clemens Semantics\thanks{The research of Hitoshi Omori was partially supported by a Sofja Kovalevskaja Award of the Alexander von Humboldt-Foundation, funded by the German Ministry for Education and Research. The research of Jonas R. B. Arenhart was also partially supported by the Alexander von Humboldt-Foundation through the Experienced Researcher Fellowship program, funded by CAPES-Humboldt.}}
\author{Hitoshi Omori
\institute{Graduate School of Information Sciences\\
Tohoku University\\
Sendai, Japan}
\email{hitoshiomori@gmail.com}
\and
Jonas R. B. Arenhart
\institute{Department of Philosophy\\
Federal University of Santa Catarina\\ 
Florian\'opolis, Brazil}
\email{jonas.becker2@gmail.com}
}
\begin{document}
\maketitle

\begin{abstract}
In this paper, we elaborate on the ordered-pair semantics originally presented by Matthew Clemens for {\bf LP} (Priest's \textit{Logic of Paradox}). For this purpose, we build on a generalization of Clemens semantics to the case of $n$-tuple semantics, for every $n$. More concretely, i) we deal with the case of a language with quantifiers, and ii) we consider philosophical implications of the semantics. The latter includes, first, a reading of the semantics in epistemic terms, involving multiple agents. Furthermore, we discuss the proper understanding of many-valued logics, namely {\bf LP}  and {\bf K3} (Kleene strong $3$-valued logic), from the perspective of classical logic, along the lines suggested by Susan Haack. We will also discuss some applications of the semantics to issues related to informative contradictions, i.e. contradictions involving quantification over different respects a vague predicate may have, as advanced by Paul \'Egr\'e, and also to the mixed consequence relations, promoted by Pablo Cobreros, Paul \'Egr\'e, David Ripley and Robert van Rooij.  
\end{abstract}

\section{Introduction}

It goes without saying that the addition of truth values to the traditional pair consisting of `truth' and `falsity' brings several interesting technical consequences to the center of the stage. Understanding what such additional truth values mean, and how they affect the resulting logical system, however, constitutes a deep philosophical challenge. The intelligibility of such systems and, consequently, of their applications, both philosophical and in general, hangs on the prior understanding of such notions. 

The idea that there are difficulties related to the appropriate understanding of logical concepts is not new, although it has not always received the appropriate attention in the context of philosophical use of many-valued systems. In the literature about the subject, the topic has been forcefully discussed by Susan Haack in \cite[chap.11]{haack1978}. Haack advanced one specific proposal to achieve a clear picture of such systems. The first point of the proposal consists in preserving two-valuedness. As she puts it: 
\begin{quote}
    I think it is clear that a many-valued logic needn't require the admission of one or more extra truth-values over and above `true' and `false', and indeed, that it needn't even require the rejection of bivalence. \cite[p.213]{haack1978} 
\end{quote} 
The second step in the proposal advanced by Haack consists in offering an explanation of how, in a many-valued scenario, one will be able to retain two-valuedness (and, sometimes, even bivalence) and actually dispense with additional \textit{sui generis} truth values. The plan is actually quite simple: whatever seems to be prima facie an additional truth value should actually be explained away; it should be read in terms of the classical truth values and some additional epistemic or semantic ingredient that accounts for the distinct option in such settings. Discussing three-valued logics, for instance, Haack suggests that in some cases the third truth value may be understood as involving \textit{epistemic and/or semantic} restrictions to classical truth values, such as `true \textit{and known to be true by an agent} $O$', or `true \textit{and analytic}'; in other cases, its meaning needs not \textit{go beyond} the two truth values, such as when one is attributing to a proposition the value `neither true nor false', which is not an extra truth value. By building our understanding of many-valued systems solely on the already available two truth values, we use these previously intelligible notions of truth and falsity to endow such many-valued systems with the much needed clarification in terms of notions understood beforehand. 

In this paper, we adopt that general Haackian strategy as a means to increase intelligibility of a family of many-valued logics.\footnote{For further discussion of the Haackian strategy in a different context, see \cite{ISMVL22,OmoriArenhartTheoria}.} Our aim is to provide a clear meaning for some such logics by endowing our target systems with a common underlying ordered-pair semantics and generalizations of it. We shall discuss how close to retaining bivalence and two-valuedness such a strategy is. However, our overall claim is that such a semantics does contribute to advance the Haackian desideratum of granting understanding for many-valued logics. In particular, we discuss not only cases of Tarskian consequence relations, but also the notoriously obscure cases of mixed consequence relations. 

The rest of the paper is structured as follows. After briefly recalling the original semantics proposed by Clemens and a generalization within the propositional language in \S\ref{sec:Clemens}, we generalize the semantics for the language with quantifiers in \S\ref{sec:more-clemens}. We will also discuss the theme from Haack for the semantics proposed by Clemens. These will be followed by \S\ref{sec:reflections} in which we discuss three applications of the semantics. These include the issues related to the mixed consequence relations. Finally, the paper will be concluded with some brief remarks in \S\ref{sec:conclusion}.

\section{Revisiting ordered-pair semantics}\label{sec:Clemens}
In this section, we first briefly recall the original Clemens' semantics, as presented in \cite{Clemens}. We then go on to present a generalization as advanced in \cite{Lori2021}.

The language $\mathcal{L}_0$ consists of a set $\{ \neg , \land , \lor \}$ of propositional connectives and a countable set $\mathsf{Prop}$ of propositional variables which we denote by $p$, $q$, etc. Furthermore, we denote by $\mathsf{Form}$ the set of formulas defined as usual in $\mathcal{L}_0$. We denote a formula of $\mathcal{L}_0$ by $A$, $B$, $C$, etc. and a set of formulas of $\mathcal{L}_0$ by $\Gamma$, $\Delta$, $\Sigma$, etc.

We first revisit the ordered pair semantics as it was set out by Clemens.

\begin{definition} {\sloppy
A four-valued interpretation of $\mathcal{L}_0$ is a function $v$ from $\mathsf{Prop}$ to $\{ \langle 1, 1 \rangle,$ $\langle 1, 0 \rangle, \langle 0, 1 \rangle,  \langle 0, 0 \rangle \}$. Given a four-valued interpretation $v$, this is extended to a function $I: \mathsf{Form}\To\{ \langle 1, 1 \rangle,$ $\langle 1, 0 \rangle, \langle 0, 1 \rangle,  \langle 0, 0 \rangle \}$ as follows}:

\begin{center}
{\small
\begin{tabular}{cccc}
\begin{tabular}{c|c}
       & $\neg$    \\ \hline
$\langle 1, 1 \rangle$ & $\langle 0, 0 \rangle$  \\
$\langle 1, 0 \rangle$ & $\langle 0, 1 \rangle$\\
$\langle 0, 1 \rangle$ & $\langle 1, 0 \rangle$\\
$\langle 0, 0 \rangle$ & $\langle 1, 1 \rangle$  \\  
\end{tabular}
\quad &
\begin{tabular}{c|ccccccc}
$\land$ & $\langle 1, 1 \rangle$ & $\langle 1, 0 \rangle$ & $\langle 0, 1 \rangle$ & $\langle 0, 0 \rangle$ \\   \hline
$\langle 1, 1 \rangle$ & $\langle 1, 1 \rangle$ & $\langle 1, 0 \rangle$ & $\langle 0, 1 \rangle$ & $\langle 0, 0 \rangle$ \\ 
$\langle 1, 0 \rangle$ & $\langle 1, 0 \rangle$ & $\langle 1, 0 \rangle$ & $\langle 0, 1 \rangle$ & $\langle 0, 0 \rangle$ \\  
$\langle 0, 1 \rangle$ & $\langle 0, 1 \rangle$ & $\langle 0, 1 \rangle$ & $\langle 0, 1 \rangle$ & $\langle 0, 0 \rangle$ \\  
$\langle 0, 0 \rangle$ & $\langle 0, 0 \rangle$ & $\langle 0, 0 \rangle$ & $\langle 0, 0 \rangle$ & $\langle 0, 0 \rangle$ \\  
\end{tabular}
\quad &
\begin{tabular}{c|ccccccc}
$\lor$ & $\langle 1, 1 \rangle$ & $\langle 1, 0 \rangle$ & $\langle 0, 1 \rangle$ & $\langle 0, 0 \rangle$ \\   \hline
$\langle 1, 1 \rangle$ & $\langle 1, 1 \rangle$ & $\langle 1, 1 \rangle$ & $\langle 1, 1 \rangle$ & $\langle 1, 1 \rangle$ \\ 
$\langle 1, 0 \rangle$ & $\langle 1, 1 \rangle$ & $\langle 1, 0 \rangle$ & $\langle 1, 0 \rangle$ & $\langle 1, 0 \rangle$ \\  
$\langle 0, 1 \rangle$ & $\langle 1, 1 \rangle$ & $\langle 1, 0 \rangle$ & $\langle 0, 1 \rangle$ & $\langle 0, 1 \rangle$ \\  
$\langle 0, 0 \rangle$ & $\langle 1, 1 \rangle$ & $\langle 1, 0 \rangle$ & $\langle 0, 1 \rangle$ & $\langle 0, 0 \rangle$ \\  
\end{tabular}
\end{tabular}
}
\end{center}
\end{definition}

\begin{remark}
Note that truth tables for conjunction and disjunction result from adapting $\min$ and $\max$ definitions, respectively, with the following order on the values: $\langle 0, 0 \rangle {<}\langle 0, 1 \rangle {<} \langle 1, 0 \rangle {<} \langle 1, 1 \rangle$. 
\end{remark}

\begin{definition}
For all $\Gamma\cup \{ A \}\subseteq \mathsf{Form}$, $\Gamma\models_3 A$ iff for all four-valued interpretations $v$, $I(A)\in \mathcal{D}$ if $I(B)\in \mathcal{D}$ for all $B\in \Gamma$, where $\mathcal{D}=\{ \langle 1, 1 \rangle, \langle 1, 0 \rangle, \langle 0, 1 \rangle \}$.\footnote{We are using the subscript 3 just to indicate that there are three designated values. See \S\ref{sub3.2} for details concerning the intended reading of such truth values.}
\end{definition}

We now recall the standard three-valued semantics for {\bf LP} (see \cite[\S7.4]{Priest08}).

\begin{definition}
A three-valued interpretation of $\mathcal{L}_0$ is a function $v_3: \mathsf{Prop}\To \{ \mathbf{t}, \mathbf{i}, \mathbf{f} \}$. Given a three-valued interpretation $v_3$, this is extended to a function $I_3$ from $\mathsf{Form}$ to $\{ \mathbf{t}, \mathbf{i}, \mathbf{f} \}$ by truth functions depicted in the form of truth tables as follows:

\begin{center}
{\small
\begin{tabular}{ccc}
\begin{tabular}{c|c}
       & $\neg$    \\ \hline
$\mathbf{t}$ & $\mathbf{f}$  \\
$\mathbf{i}$ & $\mathbf{i}$\\
$\mathbf{f}$ & $\mathbf{t}$  \\  
\end{tabular}
\quad &
\begin{tabular}{c|cccccc}
$\land$ & $\mathbf{t}$ & $\mathbf{i}$ & $\mathbf{f}$ \\   \hline
$\mathbf{t}$ & $\mathbf{t}$ & $\mathbf{i}$ & $\mathbf{f}$ \\ 
$\mathbf{i}$ & $\mathbf{i}$ & $\mathbf{i}$ & $\mathbf{f}$ \\  
$\mathbf{f}$ & $\mathbf{f}$ & $\mathbf{f}$ & $\mathbf{f}$ \\  
\end{tabular}
\quad &
\begin{tabular}{c|cccccc}
$\lor$ & $\mathbf{t}$ & $\mathbf{i}$ & $\mathbf{f}$ \\   \hline
$\mathbf{t}$ & $\mathbf{t}$ & $\mathbf{t}$ & $\mathbf{t}$ \\ 
$\mathbf{i}$ & $\mathbf{t}$ & $\mathbf{i}$ & $\mathbf{i}$ \\  
$\mathbf{f}$ & $\mathbf{t}$ & $\mathbf{i}$ & $\mathbf{f}$ \\  
\end{tabular}
\end{tabular}
}
\end{center}
\end{definition}

\begin{definition}\label{def:LP-cons}
For all $\Gamma\cup \{ A \}\subseteq \mathsf{Form}$, $\Gamma\models_{\bf LP} A$ iff for all three-valued interpretations $v_3$, $I_3(A)\in \mathcal{D}$ if $I_3(B)\in \mathcal{D}$ for all $B\in \Gamma$, where $\mathcal{D}=\{ \mathbf{t}, \mathbf{i} \}$.
\end{definition}

Based on these, Clemens established the following result in \cite{Clemens}.

\begin{fact}[Clemens]
For all $\Gamma\cup \{ A \}\subseteq \mathsf{Form}$, $\Gamma\models_3 A$ iff $\Gamma\models_{\bf LP} A$.
\end{fact}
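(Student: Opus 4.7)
The plan is to exhibit a function $h$ from the four ordered-pair truth values to the three LP truth values that simultaneously (i) commutes with the three connectives and (ii) carries the set of four-valued designated values bijectively onto the set of three-valued designated ones; the equivalence of the two consequence relations then falls out by routine translation in both directions. The intended map is the obvious collapse: $h(\langle 1,1\rangle)=\mathbf{t}$, $h(\langle 0,0\rangle)=\mathbf{f}$, and $h(\langle 1,0\rangle)=h(\langle 0,1\rangle)=\mathbf{i}$. The choice is essentially forced by negation, since $\neg$ swaps $\langle 1,0\rangle$ with $\langle 0,1\rangle$ while the LP value $\mathbf{i}$ is a fixed point of negation, so these two four-valued points must collapse to $\mathbf{i}$.

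To verify (i), the case of $\neg$ is immediate from the tables. For $\land$ and $\lor$, the cleanest route exploits the remark above: the four-valued operations are $\min$ and $\max$ under $\langle 0,0\rangle<\langle 0,1\rangle<\langle 1,0\rangle<\langle 1,1\rangle$, the three-valued operations are $\min$ and $\max$ under $\mathbf{f}<\mathbf{i}<\mathbf{t}$, and $h$ is order-preserving with the preimage of $\mathbf{i}$ exactly the interval $\{\langle 0,1\rangle,\langle 1,0\rangle\}$. For such a monotone surjection onto a linear order whose fibres are intervals, $\min$ and $\max$ are automatically preserved: if $h(x)\neq h(y)$ then the two elements lie in distinct blocks that are well-separated in the order, so $h$ faithfully reflects the $\min$/$\max$, and if $h(x)=h(y)$ then $\min(x,y)$ and $\max(x,y)$ stay in the same block and so map to $h(x)$. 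Condition (ii) is read off by inspection: $h$ sends $\{\langle 1,1\rangle,\langle 1,0\rangle,\langle 0,1\rangle\}$ exactly onto $\{\mathbf{t},\mathbf{i}\}$ and sends $\langle 0,0\rangle$ to $\mathbf{f}$, so $x\in\mathcal{D}$ iff $h(x)\in\mathcal{D}$.

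With (i) and (ii) in hand, the two directions are symmetric translations. For $\Gamma\models_3 A\Rightarrow\Gamma\models_{\bf LP}A$, take any three-valued interpretation $v_3$ and lift it to a four-valued $v$ with $h\circ v=v_3$ (any choice of preimages at the atoms works, e.g.\ sending $\mathbf{i}$ to $\langle 1,0\rangle$). A routine induction on formula complexity, using (i), yields $h(I(B))=I_3(B)$ for every $B$; by (ii), the premises are LP-designated under $v_3$ iff they are four-valued-designated under $v$, and the same for the conclusion, so the hypothesis transfers. For the converse, given any four-valued $v$, set $v_3:=h\circ v$; the same induction gives $I_3=h\circ I$, and (ii) again transports a putative four-valued counterexample back to an LP one. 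The only mildly technical step is the homomorphism check for the binary connectives, and once the min/max perspective is adopted this collapses to the single observation that the block $\{\langle 0,1\rangle,\langle 1,0\rangle\}$ is an interval; no serious obstacle is expected.
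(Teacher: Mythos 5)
Your proof is correct, and it follows essentially the same route the paper takes for the generalized first-order version of this fact (Lemmas~\ref{lem:cto3} and~\ref{lem:3toc} and the theorems built on them): a collapse map from tuples to three values in one direction, a section of that map in the other, an induction on formula complexity, and a check that designation is preserved and reflected under the translation. The only cosmetic difference is that you verify the full homomorphism condition $h\circ I=I_3$ via the order-theoretic $\min$/$\max$ observation, whereas the paper's lemmas track only the extremal values $\langle 1,\dots,1\rangle$ and $\langle 0,\dots,0\rangle$, which is all the designation check requires.
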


Let us now present a generalization of the above result, presented in \cite{Lori2021}. For this purpose, we refer to the two-element Boolean algebra as {\bf 2}. 

\begin{definition}\label{def:nCp2}
For $n\geq 2$, we define ${\bf 2}^n$ as the $n$-ary Cartesian product of {\bf 2} with the lexicographical order. Given $\langle x_1, \dots , x_n \rangle\in {\bf 2}^n$, we define a unary operation $-: {\bf 2}^n \to {\bf 2}^n$ as follows: $-\langle x_1, \dots , x_n \rangle:= \langle 1-x_1, \dots , 1-x_n \rangle$.
\end{definition}

\begin{definition}
An $n$-interpretation of $\mathcal{L}_0$ is a function $v: \mathsf{Prop}\To {\bf 2}^n$. Given an $n$-interpretation $v$, this is extended to a function $I: \mathsf{Form}\To {\bf 2}^n$ as follows: $I(p)=v(p)$; $I(\neg A){=}{-}I(A)$; $I(A{\land} B){=}\min (I(A), I(B))$; $I(A{\lor} B){=}\max (I(A), I(B))$.
\end{definition}

\begin{definition}
For all $\Gamma\cup \{ A \}\subseteq \mathsf{Form}$, $\Gamma\models_{n,t} A$ (tolerant consequence based on $n$-interpretations) iff for all $n$-interpretations $v$, $I(A)\in \mathcal{D}$ if $I(B)\in \mathcal{D}$ for all $B\in \Gamma$, where $\mathcal{D}={\bf 2}^n\setminus \{ \langle 0, 0, \dots, 0 \rangle \}$.
\end{definition}

We are now ready to recall a generalization of Clemens' observation. 
\begin{theorem}
For all $\Gamma\cup \{ A \}\subseteq \mathsf{Form}$, $\Gamma\models_{n,t} A$ iff $\Gamma\models_{\bf LP} A$.
\end{theorem}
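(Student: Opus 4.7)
My plan is to reduce both semantics to a single algebraic computation by exhibiting a surjective map
$h\colon{\bf 2}^n \to \{\mathbf{t},\mathbf{i},\mathbf{f}\}$
that commutes with the three connectives and carries the $n$-ary designated set $\mathcal{D}={\bf 2}^n\setminus\{\langle 0,\dots,0\rangle\}$ onto the {\bf LP}-designated set $\{\mathbf{t},\mathbf{i}\}$. The natural choice sends $\langle 1,\dots,1\rangle$ to $\mathbf{t}$, $\langle 0,\dots,0\rangle$ to $\mathbf{f}$, and every \emph{mixed} tuple (one containing both a $0$ and a $1$) to $\mathbf{i}$. By construction $h^{-1}(\{\mathbf{t},\mathbf{i}\})=\mathcal{D}$, so once $h$ is shown to be a homomorphism for $-$, $\min$ and $\max$, designation transfers in both directions.

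The bulk of the work is the homomorphism check, and I expect it to be the only step requiring genuine attention. Negation is immediate, since complementation swaps all-$1$s with all-$0$s and keeps mixed tuples mixed, matching the {\bf LP}-table for $\neg$. For $\min$ and $\max$, the key observation is that lexicographic order is total, so $\min(x,y)$ and $\max(x,y)$ are each always one of $x,y$ themselves; hence if $x$ and $y$ are both mixed so are $\min(x,y)$ and $\max(x,y)$, matching $\mathbf{i}\land\mathbf{i}=\mathbf{i}\lor\mathbf{i}=\mathbf{i}$, while the remaining cases (one operand all-$1$s or all-$0$s) reduce to a short case split reproducing the {\bf LP}-tables for $\land$ and $\lor$. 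Even this step is essentially mechanical.

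With $h$ in place, both directions of the theorem fall out by routine induction on formulas. Given an $n$-interpretation $v$, setting $v_3 := h\circ v$ yields $I_3(C) = h(I(C))$ for every $C$, so premises and conclusion carry matching designation status across the two semantics; this settles {\bf LP}-validity $\Rightarrow$ $n$-validity. Conversely, given a three-valued $v_3$, I pick any section $s\colon\{\mathbf{t},\mathbf{i},\mathbf{f}\}\to{\bf 2}^n$ of $h$ (one exists since $h$ is surjective, mixed tuples being available whenever $n\geq 2$) and set $v := s\circ v_3$; the same induction then lifts any {\bf LP}-countermodel to an $n$-countermodel, yielding $n$-validity $\Rightarrow$ {\bf LP}-validity.
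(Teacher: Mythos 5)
Your proof is correct and is essentially the paper's own approach (given here only for the first-order generalization, via Lemmas~\ref{lem:cto3} and~\ref{lem:3toc}): your collapsing map $h$ is exactly the translation of Lemma~\ref{lem:cto3} (all-$1$s $\mapsto \mathbf{t}$, all-$0$s $\mapsto \mathbf{f}$, mixed $\mapsto \mathbf{i}$) and your section $s$ is exactly the translation of Lemma~\ref{lem:3toc}. Packaging the two inductions as one homomorphism check plus a section is a mild economy, but the substance is the same.
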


Although the motivation of Clemens in \cite{Clemens} was focused exclusively on {\bf LP}, with a special emphasis on negation, we obtain an interesting insight into the relation between {\bf LP} and other related systems, namely {\bf CL} (classical logic) and {\bf K3}. For the purpose of clarifying our point, however, we shall state the results in the language with quantifiers, and this is the goal of the next section.\footnote{For a discussion of the propositional cases, see \cite{Lori2021}.}

\section{More on ordered-pair semantics}\label{sec:more-clemens}

\subsection{Basic observations}
The language $\mathcal{L}_1$ consists of the following vocabulary: a set $\{ \neg , \land , \lor \}$ of propositional connectives, the universal and particular quantifiers $\forall$ and $\exists$, a countable set $\{ x_0, x_1,\dots \}$ of variables, a countable set $\{c_0, c_1, \dots \}$ of constant symbols, and a countable set $\{ P_0, P_1, \dots \}$ of predicate symbols, where we associate each predicate $P_{k}$ with a fixed finite arity. We regard $0$-ary predicate symbols as propositional letters. We define the set of formulas in $\mathcal{L}_1$ as follows: 
\[ A ::= P(t_{1},\dots,t_{n}) \,|\, \neg A \,|\, A \land B \,|\,  A \lor B \,|\,  \forall x A \,|\,  \exists x A, \]
where $t_{i}$ is a {\em{term}}, namely a variable or a constant symbol. 
We say that a formula is {\em propositional} if it is constructed from propositional letters (i.e., $0$-ary predicate symbols) by using the propositional connectives. We define the notions of {\em free} and {\em bound} variable, and {\em sentence} as usual.  
We write $A_x(t)$ to mean the result of substituting all the occurrences of free variable $x$ in $A$ by the term $t$, renaming the bound variables, if necessary, to avoid variable-clashes. We denote sets of formulas by $\Gamma$, $\Sigma$, etc.

Let us first recall the three-valued semantics for {\bf K3} and {\bf LP}.

\begin{definition}
A {\em three-valued interpretation} $\mathcal{I}$ for $\mathcal{L}_1$ is a pair $\langle D, v \rangle$ where $D$ is a non-empty set and we assign $v(c) \in D$ to each constant $c$, assign an $i$-place function $v(P): D^{i} \longrightarrow \{ 0, 1/2, 1 \}$ to each $i$-ary predicate symbol $P$. Given any interpretation $\langle D, v \rangle$, we can define Clemens-valuation $\overline{v}$ for all the sentences of $\mathcal{L}_1$ expanded by $\{ k_d : d \in D \}$ inductively as follows: as for the atomic {\em sentences}, 

\begin{itemize}
\item $\overline{v}(P(t_{1},...,t_{n}))=v(P)(v(t_{1}),\dots,v(t_{n})).$
\end{itemize}

\noindent 
The rest of the clauses are as follows:

\begin{itemize}
\item $\overline{v}(\neg A)=1-\overline{v}(A)$
\item $\overline{v}(A \land B)=\min (\overline{v}(A), \overline{v}(B))$
\item $\overline{v}(A \lor B)=\max (\overline{v}(A), \overline{v}(B))$
\item $\overline{v}(\forall x A)= \min (\{ \overline{v}(A_x(k_d)): d \in D \})$
\item $\overline{v}(\exists x A)= \max (\{ \overline{v}(A_x(k_d)): d \in D \})$
\end{itemize}
\end{definition}

\begin{definition}
For all sets of sentences $\Gamma\cup \{ A \}$, $\Gamma\models_i A$ iff for all three-valued interpretations $\mathcal{I}$, $\overline{v}(A)\in \mathcal{D}_i$ if $\overline{v}(B)\in \mathcal{D}_i$ for all $B\in \Gamma$, where $i\in \{ k, l \}$ and 
\begin{itemize}
\item $\mathcal{D}_k=\{ 1  \}$ ($k$ for {\bf K3}), 
\item $\mathcal{D}_l=\{ 1, 1/2 \}$ ($l$ for {\bf LP}).
\end{itemize}
\end{definition}

Moreover, building on the notation above, we introduce an instance of the $p$-consequence relation (cf. \cite{Frankowski2004formalization}), as follows.

\begin{definition}
For all sets of sentences $\Gamma\cup \{ A \}$, $\Gamma\models_{st} A$ iff for all three-valued interpretations $\mathcal{I}$, $\overline{v}(A)\in \mathcal{D}_l$ if $\overline{v}(B)\in \mathcal{D}_k$ for all $B\in \Gamma$.
\end{definition}

Finally, we refer to the semantic consequence relation based on the standard two-valued interpretations for classical logic as $\models_2$.

We now turn to introduce the semantics inspired by Clemens.

\begin{definition}
A {\em Clemens interpretation} $\mathcal{I}$ for $\mathcal{L}_1$ is a pair $\langle D, v \rangle$ where $D$ is a non-empty set and we assign $v(c) \in D$ to each constant $c$, assign an $i$-place function $v(P): D^{i} \longrightarrow {\bf 2}^n$ to each $i$-ary predicate symbol $P$. Given any interpretation $\langle D, v \rangle$, we can define Clemens-valuation $\overline{v}$ for all the sentences of $\mathcal{L}_1$ expanded by $\{ k_d : d \in D \}$ inductively as follows: as for the atomic {\em sentences}, 
\[
\overline{v}(P(t_{1},...,t_{n}))=v(P)(v(t_{1}),\dots,v(t_{n})).
\]
The rest of the clauses are as follows (recall Definition~\ref{def:nCp2}):
\begin{itemize}
\item $\overline{v}(\neg A)={-}\overline{v}(A)$
\item $\overline{v}(A \land B)=\min (\overline{v}(A), \overline{v}(B))$
\item $\overline{v}(A \lor B)=\max (\overline{v}(A), \overline{v}(B))$
\item $\overline{v}(\forall x A)= \min (\{ \overline{v}(A_x(k_d)): d \in D \})$
\item $\overline{v}(\exists x A)= \max (\{ \overline{v}(A_x(k_d)): d \in D \})$
\end{itemize}
\end{definition}

\begin{definition}
For all set of sentences $\Gamma\cup \{ A \}$, $\Gamma\models_{n,i} A$ iff for all Clemens interpretations $\mathcal{I}$, $\overline{v}(A)\in \mathcal{D}_i$ if $\overline{v}(B)\in \mathcal{D}_i$ for all $B\in \Gamma$, where $i\in \{ s, b, t \}$ and \begin{itemize}
\item $\mathcal{D}_s=\{ \langle 1, 1, \dots, 1 \rangle \}$ ($s$ for strict), \item $\mathcal{D}_b=\{ \langle 1, x_2, \dots, x_n \rangle :  x_2, \dots, x_n \in {\bf 2} \}$ ($b$ for bossy), and 
\item $\mathcal{D}_t={\bf 2}^n\setminus \{ \langle 0, 0, \dots, 0 \rangle \}$ ($t$ for tolerant).
\end{itemize}
\end{definition}

Moreover, building on the notation above, we introduce another instance of the $p$-consequence relation as follows.

\begin{definition}
For all set of sentences $\Gamma\cup \{ A \}$, $\Gamma\models_{n,s,t} A$ iff for all Clemens interpretations $\mathcal{I}$, $\overline{v}(A)\in \mathcal{D}_t$ if $\overline{v}(B)\in \mathcal{D}_s$ for all $B\in \Gamma$.
\end{definition}

In what follows, we will establish the equivalence of consequence relations based on the two semantics.

\begin{lemma}\label{lem:cto3}
Given a Clemens-interpretation $\langle D, v \rangle$, define the three-valued interpretation $\langle D', v' \rangle$ as follows:
\begin{itemize}
\item $D':= D$
\item For each constant $c$, $v'(c){:=} v(c)$ and for each $i$-ary predicate symbol $P$, 

\begin{itemize}
\item $v'(P)(d_1, \dots d_i){=}1$ if  $v(P)(d_1, \dots d_i)=\langle 1, 1, \dots, 1 \rangle$
\item $v'(P)(d_1, \dots d_i){=}1/2$ if  $v(P)(d_1, \dots d_i)\in {\bf 2}^n\setminus \{ \langle 1, 1, \dots, 1 \rangle, \langle 0, 0, \dots, 0 \rangle \}$
\item $v'(P)(d_1, \dots d_i){=}0$ if  $v(P)(d_1, \dots d_i)=\langle 0, 0, \dots, 0 \rangle$
\end{itemize}
\end{itemize}
\noindent Then, for all sentences $A$, \textnormal{(a)} $\overline{v'}(A){=}1$ iff $\overline{v}(A){=}\langle 1, 1, \dots, 1 \rangle$; \textnormal{(b)} $\overline{v'}(A){=}0$ iff $\overline{v}(A){=}\langle 0, 0, \dots, 0 \rangle$.
\end{lemma}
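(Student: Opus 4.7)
The plan is to prove (a) and (b) simultaneously by induction on the complexity of the sentence $A$ (measured, say, by the number of logical connectives and quantifiers, noting that substitution of a closed term $k_d$ for a free variable in $B$ does not increase this complexity, so the quantifier cases reduce to the induction hypothesis on proper subformulas-with-constants). The base case $A=P(t_1,\dots,t_i)$ is immediate from the definition of $v'$: by construction $v'(P)(d_1,\dots,d_i)=1$ precisely when $v(P)(d_1,\dots,d_i)=\langle 1,\dots,1\rangle$, and $v'(P)(d_1,\dots,d_i)=0$ precisely when $v(P)(d_1,\dots,d_i)=\langle 0,\dots,0\rangle$, which is exactly what (a) and (b) demand at the atomic level.

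The inductive step rests on a handful of elementary order-theoretic facts about ${\bf 2}^n$ under the lexicographic order. Namely: $\langle 1,\dots,1\rangle$ is its top and $\langle 0,\dots,0\rangle$ its bottom; the componentwise complement satisfies ${-}\langle 1,\dots,1\rangle=\langle 0,\dots,0\rangle$ and vice versa; and for any nonempty $X\subseteq {\bf 2}^n$ we have $\min(X)=\langle 1,\dots,1\rangle$ iff every element of $X$ equals $\langle 1,\dots,1\rangle$, whereas $\min(X)=\langle 0,\dots,0\rangle$ iff some element of $X$ equals $\langle 0,\dots,0\rangle$; dually for $\max$. Since ${\bf 2}^n$ is finite, these suprema and infima always exist, so the observations apply uniformly to the binary connective cases and to the quantifier cases (where the relevant $X$ is $\{\overline{v}(A_x(k_d)):d\in D\}$, a subset of the finite lattice ${\bf 2}^n$). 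These facts mirror exactly the corresponding facts for $\{0,1/2,1\}$ equipped with $1-(\cdot)$, $\min$, and $\max$, where $1$ and $0$ likewise play the roles of top and bottom.

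Granted these observations, each inductive case is essentially mechanical. For $A=\neg B$, both (a) and (b) follow from the complement fact applied to the induction hypothesis on $B$. For $A=B\land C$ and $A=B\lor C$, combine the binary $\min/\max$ facts with the induction hypotheses on $B$ and $C$. For $A=\forall x B$ and $A=\exists x B$, combine the arbitrary $\min/\max$ facts with the induction hypothesis applied to each sentence $B_x(k_d)$ with $d\in D$. I do not expect a real obstacle here: the content of the lemma is precisely that the Clemens operations on ${\bf 2}^n$ treat the two extremal tuples $\langle 1,\dots,1\rangle$ and $\langle 0,\dots,0\rangle$ in the same way that the three-valued operations on $\{0,1/2,1\}$ treat $1$ and $0$, and the induction just records this connective-by-connective.
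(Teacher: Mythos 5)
Your proof is correct and follows exactly the route the paper intends: the paper's own proof is just the one-line ``By induction on the complexity of $A$,'' and your write-up supplies the standard details (atomic case from the definition of $v'$, plus the order-theoretic facts that $\min/\max/{-}$ on ${\bf 2}^n$ treat the top and bottom tuples the same way $\min/\max/1{-}(\cdot)$ treat $1$ and $0$ in $\{0,1/2,1\}$). Your observation that $\{\overline{v}(A_x(k_d)): d\in D\}$ is a finite subset of a total order, so the infima and suprema are attained even for infinite $D$, is a worthwhile detail the paper leaves implicit.
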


\begin{proof}
By induction on the complexity of $A$. 
\end{proof}

\begin{lemma}\label{lem:3toc}
Given a three-valued interpretation $\langle D, v \rangle$, define the Clemens interpretation $\langle D', v' \rangle$ as follows:
\begin{itemize}
\item $D':= D$
\item For each constant $c$, $v'(c){:=} v(c)$ and for each $i$-ary predicate symbol $P$, 

\begin{itemize}
\item $v'(P)(d_1, \dots d_i){=}\langle 1, 1, \dots, 1, 1 \rangle$ if  $v(P)(d_1, \dots d_i){=}1$
\item $v'(P)(d_1, \dots d_i){=}\langle 1, 1, \dots, 1, 0 \rangle$ if  $v(P)(d_1, \dots d_i){=}1/2$
\item $v'(P)(d_1, \dots d_i){=}\langle 0, 0, \dots, 0, 0 \rangle$ if  $v(P)(d_1, \dots d_i){=}0$
\end{itemize}
\end{itemize}
\noindent Then, for all sentences $A$, \textnormal{(a)} $\overline{v'}(A){=}\langle 1, 1, \dots, 1 \rangle$ iff $\overline{v}(A){=}1$; \textnormal{(b)} $\overline{v'}(A){=}\langle 0, 0, \dots, 0 \rangle$ iff $\overline{v}(A){=}0$.
\end{lemma}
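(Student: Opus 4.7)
The plan is to proceed by induction on the complexity of $A$, proving (a) and (b) simultaneously. Simultaneity is essential because the negation clause will swap the two statements: showing $\overline{v'}(\neg A){=}\langle 1,\dots,1\rangle$ corresponds to $\overline{v}(\neg A){=}1$ requires the IH in the form of (b) for $A$. The global fact driving every inductive step is that in the lexicographic order on ${\bf 2}^n$, $\langle 1,\dots,1\rangle$ is the top and $\langle 0,\dots,0\rangle$ is the bottom, so $\min$ of a family equals $\langle 1,\dots,1\rangle$ iff every member does, and $\min$ equals $\langle 0,\dots,0\rangle$ iff some member does (dually for $\max$). The same characterizations hold in $\{0,1/2,1\}$ with top $1$ and bottom $0$, and they extend to arbitrary sets indexed by $D$, which is what will be needed for the quantifier clauses.

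The base case is immediate from the defining clauses for $v'(P)$: for atomic sentences $P(t_1,\dots,t_n)$, one has $\overline{v'}(P(\vec t)){=}\langle 1,\dots,1\rangle$ iff $v(P)(\vec v(t)){=}1$, and analogously for $\langle 0,\dots,0\rangle$. For the inductive step, I would handle negation first: $\overline{v'}(\neg A){=}\langle 1,\dots,1\rangle$ iff $-\overline{v'}(A){=}\langle 1,\dots,1\rangle$ iff $\overline{v'}(A){=}\langle 0,\dots,0\rangle$, which by IH(b) is equivalent to $\overline{v}(A){=}0$, i.e., $\overline{v}(\neg A){=}1$; the (b) part is dual. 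For $A\land B$, $\overline{v'}(A\land B){=}\langle 1,\dots,1\rangle$ iff both $\overline{v'}(A)$ and $\overline{v'}(B)$ equal $\langle 1,\dots,1\rangle$ (since $\langle 1,\dots,1\rangle$ is the top), which by IH(a) is $\overline{v}(A){=}\overline{v}(B){=}1$, i.e., $\overline{v}(A\land B){=}1$; and $\overline{v'}(A\land B){=}\langle 0,\dots,0\rangle$ iff at least one conjunct is $\langle 0,\dots,0\rangle$ (since this is the bottom), matching the condition for the three-valued $\min$ to be $0$. The case for $\lor$ is dual, and the quantifier cases reduce to the same facts about $\min$ and $\max$ over the set $\{\overline{v'}(A_x(k_d)):d\in D\}$.

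I do not expect a real obstacle here: the argument is a routine structural induction once one records the two order-theoretic facts above. The only point requiring a little care is bookkeeping the negation clause so that (a) and (b) are genuinely carried through in parallel, rather than proved one after the other.
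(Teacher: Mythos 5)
Your proposal is correct and is exactly the argument the paper intends: the paper's proof of this lemma is simply ``by induction on the complexity of $A$,'' and your write-up fills in that induction in the standard way, correctly identifying the two essential points --- that (a) and (b) must be carried simultaneously because negation swaps them, and that in the total (lexicographic) order on ${\bf 2}^n$ the $\min$/$\max$ of a nonempty family hits the top (resp.\ bottom) element iff all (resp.\ some) members do, mirroring the situation in $\{0,1/2,1\}$.
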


\begin{proof}
By induction on the complexity of $A$. 
\end{proof}

\begin{theorem}
For all set of sentences $\Gamma\cup \{ A \}$, 
\textnormal{(i)} $\Gamma\models_{n,s} A$ iff $\Gamma\models_k A$,
\textnormal{(ii)} $\Gamma\models_{n,b} A$ iff $\Gamma\models_2 A$, and
\textnormal{(iii)} $\Gamma\models_{n,t} A$ iff $\Gamma\models_l A$.\footnote{Note that one can also obtain similar results by building on the framework due to Hans Herzberger, presented in \cite{herzberger1973dimensions}. For some discussions related to the results, cf. \cite{OmoriArenhartTheoria}.    }
\end{theorem}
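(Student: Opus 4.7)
The plan is to prove the three equivalences by exploiting Lemmas \ref{lem:cto3} and \ref{lem:3toc} for parts (i) and (iii), and by developing a separate first-coordinate projection argument for part (ii).

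For part (i), I would argue by contraposition in both directions. If $\Gamma \not\models_{n,s} A$, witnessed by a Clemens interpretation $\langle D, v\rangle$ making every premise take value $\langle 1, \dots, 1\rangle$ while $\overline{v}(A) \neq \langle 1, \dots, 1\rangle$, then Lemma \ref{lem:cto3}(a) immediately produces a three-valued interpretation with $\overline{v'}(B) = 1$ for every $B \in \Gamma$ but $\overline{v'}(A) \neq 1$, so $\Gamma \not\models_k A$. Conversely, if $\Gamma \not\models_k A$, Lemma \ref{lem:3toc}(a) supplies the required Clemens witness. Part (iii) is handled dually using clauses (b) of the two lemmas: the tolerant consequence $\models_{n,t}$ fails exactly when the premises avoid $\langle 0, \dots, 0\rangle$ while $A$ attains it, and $\models_l$ fails exactly when the premises avoid $0$ while $A$ attains it, so the two lemmas convert counterexamples between the settings.

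The main obstacle is part (ii), since $\mathcal{D}_b$ is not a singleton and the lemmas above do not apply. Here the strategy is a projection argument. I would define, for any Clemens interpretation $\langle D, v\rangle$, the classical interpretation $\langle D, v^{*}\rangle$ by $v^{*}(c) = v(c)$ and $v^{*}(P)(d_1, \dots, d_i) = \pi_1(v(P)(d_1, \dots, d_i))$, where $\pi_1$ is the first-coordinate projection ${\bf 2}^n \to {\bf 2}$. The central claim to verify by induction on sentence complexity is
\[
\overline{v^{*}}(A) = \pi_1(\overline{v}(A)) \quad \text{for every sentence } A,
\]
which is equivalent to saying $\overline{v}(A) \in \mathcal{D}_b$ iff $\overline{v^{*}}(A) = 1$. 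The crucial algebraic fact driving the induction is that under the lexicographical order on ${\bf 2}^n$, the first coordinate of a min (respectively max) of any finite subset equals the min (respectively max) of the first coordinates: indeed, lex order decides by the first coordinate first, so $\pi_1$ is a homomorphism with respect to min and max. Together with the fact that $\pi_1(-x) = 1 - \pi_1(x)$ for the Clemens negation, this delivers the inductive step for the connectives; for quantifiers one uses the same fact applied to the (finite, since ${\bf 2}^n$ is finite) set of values $\{\overline{v}(A_x(k_d)) : d \in D\}$.

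Once the projection identity is established, both directions of (ii) follow easily. If $\Gamma \models_2 A$ and $\overline{v}(B) \in \mathcal{D}_b$ for each $B \in \Gamma$, the identity gives $\overline{v^{*}}(B) = 1$, hence $\overline{v^{*}}(A) = 1$, hence $\overline{v}(A) \in \mathcal{D}_b$. Conversely, given a classical interpretation $\langle D, v\rangle$, lift it to a Clemens interpretation by replacing each $1$ with $\langle 1, \dots, 1\rangle$ and each $0$ with $\langle 0, \dots, 0\rangle$ (this lift is a special case of Lemma \ref{lem:3toc}); its projection recovers the original classical interpretation, so any counterexample to $\Gamma \models_2 A$ lifts to a counterexample to $\Gamma \models_{n,b} A$. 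I expect the inductive verification of the projection identity, particularly pinning down the lex-order min/max behaviour, to be the only nontrivial part of the whole proof.
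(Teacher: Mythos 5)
Your proposal is correct and follows essentially the same route as the paper: parts (i) and (iii) are handled by contraposition via clauses (a) and (b) of Lemmas~\ref{lem:cto3} and \ref{lem:3toc}, exactly as in the text. For part (ii) the paper merely appeals to ``lemmas obtained by obvious modifications'' of those two lemmas; your first-coordinate projection $\pi_1$ (together with the observation that $\pi_1$ commutes with lexicographic $\min$/$\max$ and with the Clemens negation) is precisely the modification intended, so you have simply made explicit what the paper leaves to the reader.
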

\begin{proof}
Ad. (i): For the right-to-left direction, suppose $\Gamma\not\models_{n,s} A$. Then, there is a Clemens-interpretation $\mathcal{I}$ such that $\overline{v}(B){=}\langle 1, 1, \dots, 1 \rangle$ for all $B{\in} \Gamma$ and $\overline{v}(A){\neq} \langle 1, 1, \dots, 1 \rangle$. By making use of (a) of Lemma~\ref{lem:cto3}, there is a three-valued interpretation $\mathcal{I}'=\langle D', v'\rangle$ such that we obtain that $\overline{v'}(B){=} 1$ for all $B{\in}\Gamma$ and $\overline{v'}(A){\neq}1$, that is $\Gamma\not\models_k A$. For the other way around, suppose $\Gamma\not\models_k A$. Then, there is a three-valued interpretation $\mathcal{I}$ such that $\overline{v}(B){=}1$ for all $B{\in} \Gamma$ and $\overline{v}(A){\neq} 1$. By making use of (a) of Lemma~\ref{lem:3toc}, there is a Clemens-valued interpretation $\mathcal{I}'=\langle D', v'\rangle$ such that we obtain that $\overline{v'}(B){=} \langle 1, 1, \dots, 1, 1 \rangle$ for all $B{\in}\Gamma$ and $\overline{v'}(A){\neq} \langle 1, 1, \dots, 1, 1 \rangle$, that is $\Gamma\not\models_{n,s} A$. 

Ad (ii): The proof runs in the above manner, but we make use of lemmas that are obtained by making some obvious modifications to Lemmas~\ref{lem:cto3} and \ref{lem:3toc}.

Ad (iii): The proof again runs in the above manner, but we make use of (b), instead of (a), of Lemmas~\ref{lem:cto3} and \ref{lem:3toc}. This completes the proof. 
\end{proof}

\begin{theorem}
For all set of sentences $\Gamma\cup \{ A \}$, $\Gamma\models_{n,s,t} A$ iff $\Gamma\models_{st} A$.
\end{theorem}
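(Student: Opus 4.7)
The plan is to mirror the proof of the previous theorem, but now combining \emph{both} clauses of Lemmas~\ref{lem:cto3} and~\ref{lem:3toc} simultaneously, since the mixed consequence relation $\models_{n,s,t}$ treats premises by the strict criterion and the conclusion by the tolerant criterion. The key observation is that the failure conditions for the two consequence relations correspond nicely: $\Gamma\not\models_{n,s,t} A$ means there is a Clemens interpretation in which every $B\in\Gamma$ takes value $\langle 1,1,\dots,1\rangle$ while $A$ takes value $\langle 0,0,\dots,0\rangle$, and $\Gamma\not\models_{st} A$ means there is a three-valued interpretation in which every $B\in\Gamma$ takes value $1$ while $A$ takes value $0$. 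So I prove the contrapositive in both directions.

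First I would handle the right-to-left direction. Assume $\Gamma\not\models_{n,s,t} A$ and let $\mathcal{I}=\langle D,v\rangle$ be a witnessing Clemens interpretation, so $\overline{v}(B)=\langle 1,\dots,1\rangle$ for all $B\in\Gamma$ and $\overline{v}(A)=\langle 0,\dots,0\rangle$. Applying Lemma~\ref{lem:cto3}, I define the associated three-valued interpretation $\langle D',v'\rangle$. Clause (a) of that lemma gives $\overline{v'}(B)=1$ for every $B\in\Gamma$, while clause (b) gives $\overline{v'}(A)=0$, and hence $\overline{v'}(A)\notin\mathcal{D}_l$. This witnesses $\Gamma\not\models_{st} A$.

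For the left-to-right direction, assume $\Gamma\not\models_{st} A$ and let $\langle D,v\rangle$ be a three-valued interpretation with $\overline{v}(B)=1$ for all $B\in\Gamma$ and $\overline{v}(A)=0$. Applying Lemma~\ref{lem:3toc}, I construct the associated Clemens interpretation $\langle D',v'\rangle$. Clause (a) yields $\overline{v'}(B)=\langle 1,\dots,1\rangle$ for every $B\in\Gamma$, and clause (b) yields $\overline{v'}(A)=\langle 0,\dots,0\rangle$, so $\overline{v'}(A)\notin\mathcal{D}_t$. Thus $\Gamma\not\models_{n,s,t} A$.

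There is no genuine obstacle here; the work was already done in the two translation lemmas, which were stated with both clauses (a) and (b) precisely so as to be usable in mixed settings. The only thing to emphasize is that the same Lemmas~\ref{lem:cto3} and~\ref{lem:3toc} suffice for the mixed case because the relevant ``designated'' and ``antidesignated'' extremes on each side—$\langle 1,\dots,1\rangle$ vs.\ $1$ and $\langle 0,\dots,0\rangle$ vs.\ $0$—are preserved together under the induced translation. No modification of the lemmas is required.
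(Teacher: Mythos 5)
Your proposal is correct and follows essentially the same route as the paper: both argue by contraposition in each direction, translating the witnessing interpretation via Lemmas~\ref{lem:cto3} and~\ref{lem:3toc} and using clause (a) for the premises (the strict side) and clause (b) for the conclusion (the tolerant side). The only difference is that you make explicit which clause of each lemma is invoked where, which the paper leaves implicit.
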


\begin{proof}
Suppose $\Gamma\not\models_{n,s,t} A$. Then, there is a Clemens-interpretation $\mathcal{I}$ such that $\overline{v}(B){=}\langle 1, 1, \dots, 1 \rangle$ for all $B{\in} \Gamma$ and $\overline{v}(A){=} \langle 0, 0, \dots, 0 \rangle$. By making use of Lemma~\ref{lem:cto3}, there is a three-valued interpretation $\mathcal{I}'=\langle D', v'\rangle$ such that we obtain that $\overline{v'}(B){=} 1$ for all $B{\in}\Gamma$ and $\overline{v'}(A){=}0$, that is $\Gamma\not\models_{st} A$. For the other way around, suppose $\Gamma\not\models_{st} A$. Then, there is a three-valued interpretation $\mathcal{I}$ such that $\overline{v}(B){=}1$ for all $B{\in} \Gamma$ and $\overline{v}(A){=} 0$. By making use of Lemma~\ref{lem:3toc}, there is a Clemens-valued interpretation $\mathcal{I}'=\langle D', v'\rangle$ such that we obtain that $\overline{v'}(B){=} \langle 1, 1, \dots, 1, 1 \rangle$ for all $B{\in}\Gamma$ and $\overline{v'}(A){=} \langle 0, 0, \dots, 0, 0 \rangle$, that is $\Gamma\not\models_{n,s,t} A$. 
\end{proof}

\subsection{Clemens in view of Haack}\label{sub3.2}

Now that the basics of the generalized and first-order Clemens semantics is presented, we may return to the problem of providing for understanding of many-valued logics, as raised by Susan Haack, in view of the Clemens semantics. More explicitly, we need to address how the Clemens semantics contributes to fulfil the explicit demand for intelligibility advanced by Haack. As we have commented in the introduction, Haack's strategy for the understanding of some prima facie candidate for a \textit{sui generis} truth value is as follows: in order to endow a system of many-valued logic with intelligibility, we should attempt to `read' such truth values in terms of the already known and understood classical truth values, possibly with additional semantic or epistemic contours. If that can be done, the need for additional truth values is actually avoided, and we have explained them away, in a sense.

Given that demand, the next natural question is: can one such `classical reading' of the truth values be attributed to the generalization of the semantics advanced by Clemens? It is our contention now that this is perfectly possible, and more, that the framework presented is quite classical, in a sense. Let us focus on the simple ordered-pair semantics as originally presented by Clemens, where the set of truth values is $\{ \langle 1, 1 \rangle,$ $\langle 1, 0 \rangle, \langle 0, 1 \rangle,  \langle 0, 0 \rangle \}$ (it is a simple matter to extend the readings to more general cases). Clearly, given the order established for the truth values, and the division between designated and non-designated truth values in order to define the classical consequence relation, it is not difficult to see \emph{the first component} of the pairs as playing a more prominent role than all the others. That is, there is a natural reading of the truth values where the first component marks a division between truth and untruth, regardless of what the second component adds to it. 

One may consider this reading favoring the first component as being elaborated according to a kind of realist approach to truth and falsity that classical logic is said to promote anyway. There is a sense in which propositions in a classical setting are defined as to their truth or falsity independently of whether any one agent knows the relevant facts about such a distribution of truth values (a discussion is to be found in \cite{dummett1984}). In this kind of reading, the classical meaning of the connectives may be properly understood as available through the first component of the truth values, while the other components add an epistemic dimensions, that is, they add, and that is one possibility, appreciations of different agents that may disagree on the truth value of some proposition. A first shot on understanding what is going on, suggested by Clemens himself (\cite[p.202]{Clemens}), advanced the readings as follows:  

\smallskip

\begin{tabular}{ll}
$\langle 1, 1 \rangle$ = true, and true only;     & $\langle 0, 1 \rangle$ = false, but also true; \\
$\langle 1, 0 \rangle$ = true, but also false;     & $\langle 0, 0 \rangle$ = false, and false only. 
\end{tabular}

\smallskip

\noindent That makes for an interesting first attempt in the direction of a better understanding of the truth values involved in terms of the already available classical truth values, conferring also classical intelligibility to {\bf K3} and {\bf LP}. It is clear that reading the truth values like that requires that some sentences receive two of such classical truth values some times, but that is not a problem; as Haack comments on what concerns the case of truth-value gaps in \cite[p.213]{haack1978}:
\begin{quote}
Assignment of the third truth value to a wff [well-formed formula] indicates that it has \textit{no} truth value, not that it has a non-standard, third truth value.   
\end{quote}
So, in the case of gaps, intelligibility is preserved. By parity of reasoning, of course, attributing two classical truth values to a formula is also not the attribution of a non-standard truth value; it is merely attribution of two of the available truth values to a formula (for further discussions, see \cite{Lori2021}).

As a result, in the sense required by Haackian demands of intelligibility, the semantics presented by Clemens does seem to stay very close to a classical semantics, allowing for readings in terms of truth and falsity that stay very close to classical logic. The division between two groups of truth and false sentences then contribute to the idea that no additional \textit{sui generis} truth values has been added.

We have also seen that one may generalize the semantics from ordered pairs to $n$-tuples of classical truth values. In a sense, that causes no additional complication on what concerns the understanding of such truth values. The order attributed to the $n$-tuples does the work in getting the appropriate division between truths and falsehoods when it comes to obtain classical logic: truth is whatever has truth as its first component; false is whatever has falsity as its first component. 

In the cases of non-classical systems --- {\bf K3} and {\bf LP}--- the motivations for the choice of designated truth values may come from different fronts, regarding the role of the order of the truth values and additional suppositions that a more nuanced choice may be motivated. One may see {\bf K3} as involving choice of certified or strict truth as designated, while {\bf LP} may be seen as involving a tolerant approach where only certified falsity is excluded. 

The role of the order of truth values will play a prominent role in selecting each kind of systems available, and providing for interesting uses of such systems. In the next section, we shall follow Haack and suggest more epistemic-oriented readings of the truth values

\section{Reflections}\label{sec:reflections}

In this section, we discuss various topics concerning the proper understanding of the semantics proposed here, and how the Haackian demand for intelligibility may be achieved by using such a semantics. In particular, we focus on how the framework developed here sheds light on some not so clear issues concerning many-valued logics and their applications in connection with mixed consequence relations. 

\subsection{Topic I - Agent reading}
To begin with, besides the original Clemens reading, we introduce one additional possible reading for the truth values available in the generalization of Clemens semantics. Doing so will offer a more epistemic reading of the truth values, which justifies our claim that we are following the Haackian strategy of reading typical additional truth values in terms of the classical truth values with epistemic restrictions on them. Such reading also motivates a plainly classical understanding of the different consequence relations available, as well as motivates a discussion on the meaning of the connectives (again, the reader may also see the discussion in \cite{Lori2021,OmoriArenhartTheoria}).\footnote{Haack did not, in fact, extend her discussion of the understanding of the additional truth values to the consequence relations in scenarios involving such truth values.} We can think of roughly the following kind of intuitive reading that is seen as the result of epistemic qualification to the classical truth values. 

A specifically epistemic reading may be conferred to the order of the truth values in the $n$-tuples available for the Clemens semantics if we approach it in terms of $n$ distinct agents, each of whom is supposed to evaluate the classical truth value of any given atomic proposition. Each element of an $n$-tuple then corresponds to the evaluation of the $n$-th agent. In order to make sense of the order of truth values, we can rank agents confidence in their evaluation too, so that we may think of going from specialists on a topic --- the first entry from left to right --- to someone who is not actually specialist on the topic ---the first entry from right to left. Collectively, once the evaluations are performed for the atomic propositions, we may compute the truth values of complex propositions by evaluating the Boolean connectives. 

Besides including agents, one can also think of an epistemic reading that is less focused on human beings, and more focused on procedures, reading the positions on the truth values as different tests that may be applied to check the application of a given predicate, with tests varying on their rigour or confidence. So, with $n$ tests, an $n$-tuple would fill with $1$ or $0$ the $n$-th position depending on whether the $n$ test is positive or negative for the application of the predicate (this reading is a generalization and adaptation of a discussion by Newton da Costa in \cite[p.131]{dacostaensaio}).\footnote{The first edition of da Costa's book is from 1980.} The order of the truth values would rank the degree of confidence we accept for a test in granting that the predicate does apply. 

By focusing on the agent reading for the sake of simplicity, the distinct consequence relations should be read: 
\begin{enumerate} 
\item {\bf K3} is the logic resulting from preserving only what all the agents agree on being true. In this sense, this logic requires unanimity if a proposition is to follow from a unanimous set of premises; 
\item {\bf CL} requires that the first agent should be seen as having privileged epistemic abilities, so that validity is related to whatever that particular agent judges as true. One may also consider that the first component is a kind of God's eye point of view, never failing, and the other ones are fallible human beings; 
\item {\bf LP}  results when one is more tolerant towards all of the agents opinions; validity is prevented only in cases where there is consensus about falsity. 
\end{enumerate}

\subsection{Topic II: Respects}

Another interesting application of the generalized Clemens semantics may be found in relation to Paul \'Egr\'e's discussion on acceptable contradictions in \cite{egre2019} (we omit some of the more fine grained details of \'Egr\'e's exposition). The discussion is related to the dialetheist claim that some contradictions may be actually true (and also false). The major example of one such contradiction, of course, is derived from discussions on the Liar paradox (see \cite{Priest2006-PRIICA} for the \textit{locus classicus}). In one possible presentation, the Liar sentence may be presented by introducing a sentence $\lambda$ that says of itself that it is false: 
\begin{itemize}
    \item[$\lambda:$] The sentence $\lambda$ is false. 
\end{itemize}
With very simple logical derivations usually available, one may then derive that the Liar sentence is both true and false. 

According to \'Egr\'e, if one is going to accept some contradictions, as a dialetheist is motivated to, one should attempt to make clear sense of such contradictions. In particular, it may happen, as dialetheists argue, that some contradictions are actually informative, not empty of content. \'Egr\'e then goes on to define an \textit{acceptable contradiction} as an informative sentence of the form `$x$ is $P$ and $x$ is not $P$'. The contradictions are understood as involving a kind of vagueness, they hide some additional information regarding the assertion of a predicate and its denial; literally, one is asserting the predicate according to some regards or respects, while at the same time denying it according to other respects: 
\begin{quote}
the acceptability of contradictions involving adjectives in particular (including ``true'') might indeed be grounded in the availability of multiple respects of application, but provided those respects of comparison are closely related to each other in a way that is constitutive of the vagueness of the expression in question. \cite[p.41]{egre2019}
\end{quote}
This looks quite similar to the above criteria of application of a predicate; the same predicate had to have different criteria of application, which could result in different verdicts concerning the appropriateness of application of the predicate. Now, instead of criteria of application, what we have is different respects associated with the same predicate. Contradictory sentences involve quantification over respects available for the application of the terms that are involved in generating the contradiction. This may be the case for adjectives, like `good', `intelligent', `tall'. A contradiction like 
\begin{itemize}
\item John is rich and John is not rich
\end{itemize}
is then understood as involving quantification over respects, with the latter indicating that John may be rich in some respects, but not rich in other (different) respects. For example, it may be that, in regard of academic professors, John is actually rich, while, at the same time, according to the standard used to compute latest list of billionaires in the world, John is not even close to being rich. 

The idea that acceptable contradictions involve quantification over respects applies not only to adjectives, but also to nouns, for example: 
\begin{itemize}
  \item Mario is a man and is not a man,  
\end{itemize}
where the first occurrence of `man' designates `man with respect to gender', and the second one designates `man with respect to satisfaction of some stereotype of masculinity'. We may quantify over respects also in the case of verbs: 
\begin{itemize}
\item I like fish and I don't like fish    
\end{itemize}
which indicates that there are some respects according to which I like fish, let us say, as animals, while it also indicates that I don't like fish in every respect, let us say, as an option for a meal. The plan, remember, is that ``each time contradictions can be paraphrased by means of an explicit specification of distinct respects of application.'' (\cite[p.44]{egre2019})

As a template of the analysis of informative contradictions in terms of the proposed paraphrase using different respects, we have the following scheme: 
\begin{itemize}
\item $x$ is $P$ [in some respects], and $x$ is not $P$ [in some respects].
\end{itemize}
\'Egr\'e prefers the following way of putting it (this will be relevant for us soon):
\begin{itemize}
    \item $x$ is $P$ [in some respects], and $x$ is not [in all respects] $P$.
\end{itemize}
Given this account of informative contradictions, the informativeness is accounted for by the fact that ``the respects relevant to the second conjunct are distinct from the respects relevant to the first'' (\cite[p.46]{egre2019}). That means that it is different information that is being dealt with in the affirmation and in the negation. 

In summary again, the plan is the following:
\begin{quote}
    The basic idea is that relevant respects determine different extents to which a property can be satisfied, and those extents can be quantified over. \cite[p.50]{egre2019}
\end{quote}
This availability of different respects for application of a predicate opens the door for application of Clemens semantics. Using the generalized Clemens semantics, we can fix some $n$ and interpret the places in the $n$-tuples representing truth values as the different respects available for a given noun, adjective or verb. The values $1$ and $0$ indicate whether a given object qualifies as having the corresponding noun, adjective or verb in the corresponding respect. Let us fix on the discussion of the example ``John is a man and John is not a man''. For the sake of simplicity, let us suppose we have two respects, the first one is related to being a man in respect to John's gender, the second one is related to being a man as concerned with a given stereotype of masculinity. Then, we have the four options:
\begin{itemize}
    \item $\langle 1,1\rangle$: John is a man according to gender, and according to the stereotype;
    \item $\langle 1,0\rangle$: John is a man according to gender, but not according to the stereotype;
    \item $\langle 0,1\rangle$: John is not a man according to gender, but satisfies the man stereotype;
    \item $\langle 0,0\rangle$: John is not a man according to gender, and also not according to the stereotype.
\end{itemize}
This nicely illustrates the idea that we can have different respects that can be quantified over; basically, for any $n$ we can have a semantics with $n$ respects. It also captures the claim, by \'Egr\'e, that a contradiction is informative when some predicate is not the case for all respects, so that it can be applied in relation to some respects, but not to others. That matches well the idea that if a proposition is the case for all respects (it receives a block of $1$s), then its negation will not be the case (it will receive a block of $0$s). The conjunction the will be just completely false for each respect. There is a sense in which such contradictions say nothing, they exclude the applicability of the predicate according to any respect. In this specific case, there is disagreement as related to every respect.

The distinct consequence relations that can be defined on the top of the Clemens semantics also acquire an interesting reading with that kind of approach. Let us briefly check:
\begin{itemize}
    \item {\bf K3} is the logic obtained when consequence must preserve satisfaction of all the respects; not contradictions allowed, even if informative;
    \item {\bf LP} is the logic obtained when informative contradictions are allowed; uninformative contradictions should be ruled out;
    \item {\bf CL} is the logic where the first respect has a priority over others, so that it is this one that must be preserved.
\end{itemize}

One final point before we leave this particular application. It is interesting to remark once again that a fixed order for the different respects is required, if Clemens semantics is to be used in this case. That means that some regards are considered to be more important than others, at least in each context. This is not completely unrealistic, given that depending on a context, one may privilege some respects as more important than others. In a certain sense, given the classical reading of the consequence relation, privileging the first regard could be read in a kind of epistemic approach to vagueness, where vagueness is only in language, and the first regard is a kind of universal standard (God's knowledge of borders). So, the other regards would play a role similar to the one different agents played in the agent reading.

\subsection{Topic III - Mixed consequence}

Now, let us consider the effect that Clemens semantics may have on topics related to mixed consequence relations. From a technical point of view, such consequence relations are simple to obtain on the top of a three-valued semantics for {\bf K3} and {\bf LP}. However, given that for the cases of these logics such a semantics is typically interpreted in different terms, with the third truth value as meaning either gaps or gluts, respectively, the understanding of the mixed consequence relations seems to face some difficulties: the meaning of the third truth value seems to fluctuate between gap and glut, depending on whether it is taken strictly or tolerantly. Let us be a bit more explicit about it: when the set of truth values is considered from a strict point of view, the third truth value is not designated, and so, is not to be counted as truth, acquiring the features of a gap as per {\bf K3}; when considered from the tolerant point of view, the third truth values is read as per {\bf LP}, looking like a glut. In a sense, then, the third truth values has a kind of chameleon nature. 

In order to face some of such difficulties, a distinct reading for the semantics is offered in \cite{Cobreros2012tolerant}:
\begin{quote}
A second feature of our target semantics is that, while it coincides with the predictions of the many-valued logics LP and K3, it answers to a distinct motivation. Rather than seeing truth as a unified notion to which sentences might answer in three (or more) different ways, our approach posits distinct notions of truth, each of which a sentence may have or fail to have, but none of which is many-valued. \cite[p.365]{Cobreros2012tolerant}
\end{quote}
Although the idea is to provide understanding of the semantic concepts involved, the strategy is requiring that truth be understood as a multiplicity of concepts. That is clearly a very non-classical reading of the notions of truth and falsity, illustrating what one may take as the addition of some new \textit{sui generis} truth values. The result is that those like Haack, who are not sympathetic to such additions of truth values would be intrigued by what `distinct notions of truth' could mean. What Clemens semantics does, in this case, is to provide for chances of uniform readings of the semantic values for mixed consequence cases, just as for standard ones. The intuitive reading advanced by Clemens, or else the agent reading, present before, are nice illustrations. The readings are there before the consequence relation is defined, so they can be used to illuminate the system independently of what kind of consequence relation one plugs in the semantics. 

But besides using the already offered readings for the Clemens semantics for the understanding of the three-valued presentations of mixed consequence relation cases, we can also benefit from those readings to make a sharper sense of what `strict' and `tolerant' mean as per \cite{Cobreros2013reaching}. It is suggested that there are two ways of understanding sentences when it comes to classify them as strict or tolerant: there is a pragmatic way, according to which an \emph{assertion} is qualified in terms of strict and tolerant, and there is an approach through \emph{meaning}, where a sentence may have strict or tolerant meaning. Concerning the pragmatic approach:
\begin{quote}
 [\dots] we can see a direct connection between model-theoretic
value and assertibility. A sentence is either both strictly and tolerantly assertible (value $1$), tolerantly but not strictly assertible (value $\frac{1}{2}$), or not assertible at all (value $0$). We do not allow for sentences that are strictly but not tolerantly assertible; strict assertion, on this picture, is a (strictly) stronger speech act than tolerant assertion. \cite[pp.857-858]{Cobreros2013reaching}
\end{quote}

Notice that as an explanation of what `strict' and `tolerant' mean, those are a bit circular: if we wanted to know what `strict' and `tolerant' mean, the explanation comes in terms already using `strict' and `tolerant'. Those terms gain interesting meanings when one uses the Clemens semantics, and also, one obtains a more fine-grained distinction, allowing a distinction of two kinds of tolerant assertions.\footnote{More kinds are allowed, of course, when ordered $n$-tuples are used, for $2 < n$.} A sentence is strictly assertible, according to the agent reading, if it is asserted by both agents (considering the case of two agents). It is tolerantly assertible, but not strictly assertible, in two distinct scenarios: when only the first agent asserts it, or else when only the second agent asserts it. It is neither strictly nor tolerantly assertible when no agent asserts it. Here, `strict' and `tolerant' qualify the assertions made by each of the agents, which are previously understood in classical terms (thus satisfying Haack's demands). 

The `meaning approach' to strict and tolerant offered by \cite{Cobreros2013reaching} is equally dependent on our having grasped the meaning of `strict' and `tolerant' beforehand: 
\begin{quote}
The other approach works at the level of \textit{meaning}. Rather than supposing that there are two distinct speech acts of assertion, this approach supposes that each sentence has two distinct meanings (or two distinct aspects of its meaning, if you like) that can be asserted: its strict meaning and its tolerant meaning. Understanding meanings as dividing the space of models in two, we can understand a sentence’s strict meaning as one drawing a division between those models on which the sentence takes value $1$ and those on which it takes some value less than $1$, and we can understand a sentence’s tolerant meaning as one drawing a division between those models on which the sentence takes some value greater than 0 and those on which it takes value $0$. \cite[p.858]{Cobreros2013reaching}
\end{quote}
Again, according to Clemens semantics, a sentence may have meaning, only tolerant but not strict, or neither. However, tolerant meaning may be qualified in different guises, just as in the case of tolerant truth. These may be cashed in terms of the agent reading, or of the original reading by Clemens, among others. They do confer a nice illustration of how those notions may be understood in terms of the classical concepts, even though this understanding deviates from the original one proposed by \cite{Cobreros2013reaching}. 

\section{Concluding remarks}\label{sec:conclusion}

In this paper, we have expanded on a semantic framework advanced originally by Matthew Clemens. In particular, we have presented a Clemens semantics for first-order logic, and we also considered the use of such a framework to deal with mixed consequence relations. The benefits of such an investigation were explored through the lenses of a demand formerly expressed by Susan Haack, according to which many-valued logics become more intelligible when additional truth values are analysed in terms of bivalent truth and falsity. We have provided for some readings of Clemens semantics that satisfy such a requirement, and indicated how such readings impact on current attempts to use many-valued logics to deal with some interesting philosophical problems. 

\bibliographystyle{eptcs}
\bibliography{Clemens}
\end{document}